\documentclass[journal]{IEEEtran}
\usepackage{amsmath,amsfonts}
\usepackage{amssymb,amsfonts}
\usepackage{algorithmic}
\usepackage{algorithm}
\usepackage{array}
\usepackage[caption=false,font=normalsize,labelfont=sf,textfont=sf]{subfig}
\usepackage{textcomp}
\usepackage{stfloats}
\usepackage{url}
\usepackage{verbatim}
\usepackage{graphicx}
\usepackage{cite}
\usepackage{bm,bbm}
\PassOptionsToPackage{hidelinks}{hyperref}
\usepackage{orcidlink}
\usepackage{amsthm}
\usepackage{booktabs}
\usepackage{graphicx}
\newcommand{\by}{\bm{y}}
\newcommand{\bg}{\bm{g}}
\newcommand{\bz}{\bm{z}}
\newcommand{\bI}{\bm{I}}
\newcommand{\bA}{\bm{A}}

\newcommand{\bC}{\bm{C}}

\newcommand{\bc}{\bm{c}}
\newcommand{\be}{\bm{e}}
\newcommand{\bB}{\bm{B}}

\newtheorem{proposition}{Proposition}

\newtheorem{remark}{Remark}

\begin{document}

\title{Joint Low-Dimensional Modeling and Sampling Design for Sparse On-Orbit Antenna Pattern Reconstruction}

\author{Yannan Chen,~\IEEEmembership{Memmber,~IEEE}, Qiuchen Liu,~\IEEEmembership{Memmber,~IEEE},Guitong Chen, \\Zezhou Luo, and Lei Huang,~\IEEEmembership{Senior Member,~IEEE}
\thanks{This work was supported by the National Key R\&D Program of China under Grant 2023YFB3905500. A preliminary conference version of part of this work has been accepted for presentation at the 2026 IEEE International Geoscience and Remote Sensing Symposium (IGARSS). (\emph{Corresponding author: Lei Huang.})\\
Yannan Chen, Qiuchen Liu, Guitong Chen, Zezhou Luo and Lei Huang are with the Key Laboratory of Radio Frequency Heterogeneous Integration, Shenzhen University. (E-mail: {chenyannan, lhuang}@szu.edu.cn)\\
}}

\maketitle
\begin{abstract}
    Accurately reconstructing satellite transmit-antenna patterns on orbit is difficult because only sparse directional measurements are available during normal mission operations. This paper develops a cooperative on-orbit pattern-reconstruction framework that converts received calibration power into normalized directional samples and represents the antenna power pattern using a truncated discrete cosine transform (DCT) basis. The resulting low-dimensional model transforms high-dimensional pattern recovery into coefficient estimation, for which a closed-form maximum-likelihood estimator and error characterization are derived. The analysis shows how DCT truncation error, measurement noise, and sampled-basis conditioning jointly affect reconstruction accuracy. For regularly accessible angular sectors, midpoint-uniform sampling provides an information-balanced baseline for the retained DCT modes. For constrained feasible opportunities, D-optimal sampling is used to select informative measurement directions. Simulations verify the accuracy of the angular discretization, the sample efficiency of the truncated-DCT model, and the reconstruction gain of D-optimal sampling under irregular orbit-generated opportunities.
\end{abstract}

\begin{IEEEkeywords}
	Antenna pattern reconstruction, calibration satellite, low-dimensional modeling, D-optimal design, measurement scheduling
\end{IEEEkeywords}

\section{Introduction}

Accurate knowledge of the antenna radiation pattern is essential for ensuring the radiometric stability and geolocation performance of satellite-based microwave sensors. Pre-launch chamber measurements provide an initial characterization, but the on-orbit pattern often deviates from ground-tested results due to on-orbit environmental variations, hardware drift, and antenna-model uncertainties \cite{SAR1992,brautigam2009performance,bachmann2010terrasarx}. These deviations accumulate over the mission lifetime and can significantly degrade retrieval accuracy if left uncorrected. Consequently, reliable on-orbit characterization of the directional antenna pattern has become an important requirement for next-generation spaceborne instruments.

Several operational calibration approaches exist for maintaining the radiometric accuracy of spaceborne microwave sensors. Ground-based calibration sites offer high-quality radiometric references but provide limited angular diversity for antenna-pattern reconstruction \cite{CAVC_2023}. Natural references such as the Moon or cold space serve as stable radiometric targets, yet they do not enable controlled directional sampling required for antenna pattern reconstruction \cite{keihm2025moon}. Cross-calibration between sensors improves radiometric consistency but does not yield instrument-specific pattern information \cite{liu2004new}. Therefore, these approaches are effective for radiometric referencing and consistency monitoring, but they are not sufficient for reconstructing the directional antenna pattern itself.

Conventional antenna pattern measurement techniques, such as compact-range testing \cite{johnson1969compact}, near-field scanning \cite{giordanengo2014fast}, and aerial-platform-based flight measurements \cite{duthoit2017new}, provide accurate antenna characterization before deployment or during dedicated test campaigns. However, they cannot be directly applied after launch, and repeated on-orbit scanning through attitude maneuvers is generally impractical because it consumes fuel and disrupts nominal mission operations. Antenna-pattern-specific calibration has also been investigated in spaceborne synthetic-aperture radar (SAR) missions. Distributed targets have been used to estimate spaceborne SAR antenna patterns \cite{shimada1995technique,guccione2018low}, but the available angular information is determined by the imaging geometry and scene coverage rather than by freely selectable calibration directions. Operational missions such as TerraSAR-X rely on antenna-model calibration, monitoring, and radiometric verification to maintain product accuracy \cite{schwerdt2010final}; however, these procedures are mission-specific and depend on a well-established instrument model, internal calibration information, and dedicated calibration campaigns.

Recent studies have begun to exploit cooperative spaceborne references for on-orbit calibration and antenna-pattern characterization. For spaceborne SAR, calibration-satellite concepts have been investigated to provide external calibration references beyond ground sites \cite{wang2018sarcalnet}, and a nano calibration satellite has been proposed for measuring the elevation antenna pattern of a SAR payload in medium Earth orbit (MEO)\cite{qiu2022inorbit}. More recently, a small measurement satellite flying in a double-cross-helix formation was studied for on-orbit satellite antenna-pattern measurement \cite{mittermayer2024inorbit}. A related line of work treats spaceborne receivers as an antenna range; for example, Cyclone Global Navigation Satellite System (CYGNSS) measurements have been used to characterize Global Positioning System (GPS) transmit power and antenna gain patterns \cite{wang2018gpscygnss,wang2024spaceborne}. These studies demonstrate the feasibility and value of using another spacecraft or spaceborne receiver as a calibration reference. However, they mainly focus on calibration concepts, measurement geometries, or mission-specific data processing. The problem of selecting a very small number of feasible look directions and reconstructing a full transmit pattern from those sparse measurements remains insufficiently addressed, especially when the calibration opportunities are constrained by orbital geometry and mission operations.

In the cooperative scenario considered in this paper, the active microwave microsatellite under test transmits a known calibration signal, and a calibration satellite receives the signal during short visibility windows. After orbit-attitude determination and link-budget compensation, the received power can be converted into directional samples of the microsatellite transmit pattern. However, the accessible look angles are determined jointly by relative orbit geometry, payload attitude, receiver capability, link margin, telemetry schedule, and mission-safety constraints. Therefore, on-orbit transmit-pattern characterization becomes an under-sampled and ill-conditioned inverse problem over a sparse set of physically feasible directional measurements. Sparse antenna-measurement methods have been studied to reduce the number of samples in controlled test facilities by exploiting sparsity in modal or dictionary domains \cite{hofmann2019minimum,bangun2023optimizing}. In contrast, this paper focuses on cooperative on-orbit reconstruction, where the transmit pattern is represented by a fixed low-dimensional DCT subspace and the sampling directions must be selected from physically feasible opportunities constrained by relative orbit geometry and mission operations.

To address these challenges, we propose a unified framework for reconstructing transmit antenna radiation patterns from sparse cooperative on-orbit measurements. The framework converts received calibration power into normalized directional pattern samples and reconstructs the discretized pattern through a low-dimensional DCT coefficient model \cite{ahmed1974discrete} with closed-form maximum-likelihood estimation. This low-dimensional representation reduces the number of required measurements and separates pattern representation from sampling design. For regularly accessible angular sectors, uniform sampling is adopted as an information-balanced baseline, while for constrained feasible opportunities, D-optimal sampling is used to select informative look directions from the available candidate set.

The main contributions are summarized as follows.
\begin{itemize}
    \item We formulate cooperative on-orbit transmit-pattern reconstruction as a sparse inverse problem. The received calibration power is converted into normalized directional pattern samples, and the dense reconstruction grid is distinguished from the physically feasible measurement-opportunity set.

    \item We develop a truncated-DCT model that reduces high-dimensional pattern recovery to low-dimensional coefficient estimation. A closed-form maximum-likelihood estimator and its error characterization are derived to reveal the effects of DCT truncation, measurement noise, and sampled-basis conditioning.

    \item We study sampling design for  feasible opportunities. Uniform sampling is identified as an information-balanced baseline under regular angular accessibility, while D-optimal sampling with a greedy submodular algorithm is used for irregular or incomplete candidate sets.
\end{itemize}

The rest of the paper is organized as follows. Section~\ref{sec:system_model} formulates the cooperative on-orbit pattern reconstruction problem. Section~\ref{sec:low_dimensional_model} presents the low-dimensional pattern model and coefficient estimator. Section~\ref{sec:sampling_design} develops the sampling design strategy. Section~\ref{sec:performance_complexity} analyzes the performance guarantee and computational complexity. Section~\ref{sec:simulation} provides numerical results. Finally, Section~\ref{sec:conclusion} concludes the paper. 

Here and throughout, bold lower-case letters represent vectors, while bold upper-case letters represent matrices. For a matrix $\bm A$, $\bm A^{\top}$, $\bm A^{-1}$, $\operatorname{Tr}(\bm A)$, and $\det(\bm A)$ denote its transpose, inverse, trace, and determinant, respectively. The identity matrix and the all-zero vector are denoted by $\bm I$ and $\bm 0$, respectively. The set of real numbers and the set of nonnegative real numbers are denoted by $\mathbb R$ and $\mathbb R_{+}$, respectively. For a finite set $\mathcal S$, $|\mathcal S|$ denotes its cardinality. For two symmetric matrices $\bm A$ and $\bm B$, $\bm A\succeq\bm B$ means that $\bm A-\bm B$ is positive semidefinite. Moreover, $\mathbb E[\cdot]$, $\|\cdot\|_2$, $\log(\cdot)$, $\kappa(\cdot)$, $\sigma_{\min}(\cdot)$, and $\sigma_{\max}(\cdot)$ denote expectation, Euclidean or spectral norm, natural logarithm, condition number, minimum singular value, and maximum singular value, respectively.

\section{System Model and Problem Formulation}
\label{sec:system_model}

This section introduces the on-orbit transmit-pattern measurement model, provides a sampling-theoretic motivation for discretizing a continuous radiation pattern, and formulates the sparse pattern reconstruction problem.

\subsection{On-Orbit Transmit Power Pattern Measurement Model} 

\begin{figure}[t]
    \centering
    \includegraphics[width=0.9\linewidth]{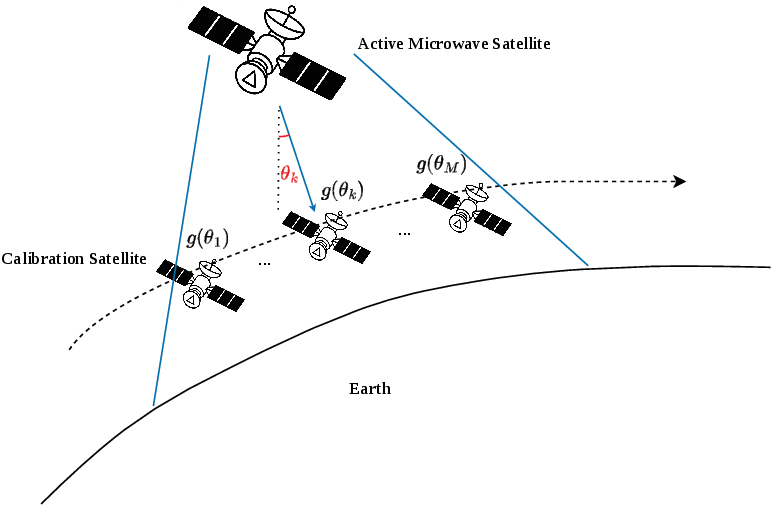}
    \caption{Illustration of the cooperative calibration platform. The platform follows a known trajectory and collects directional measurements.}
    \label{fig:calibration_platform}
\end{figure}

We consider the problem of reconstructing the transmit power radiation pattern of an active microwave satellite over a bounded service angular sector. As illustrated in Fig.~\ref{fig:calibration_platform}, the active microwave satellite transmits calibration signals, while a cooperative calibration satellite follows a known trajectory and receives these signals from different look directions. Each feasible observation provides a received-power measurement, or equivalently a calibrated signal-strength measurement, associated with a transmit direction relative to the payload antenna. 

Let $\Theta$ be the angular region of interest, and the transmit power pattern over this region be a real-valued function $g(\theta)$, where $\theta\in\Theta$. At a feasible transmit direction $\theta$, the received calibration power is modeled as
\begin{align}
    P
    =
    \Gamma g(\theta) + n,
    \label{eq:received_power_model}
\end{align}
where $\Gamma$ collects deterministic link-budget and calibration factors, and $n$ denotes the residual measurement disturbance. In practical spaceborne antenna-range calibration, factors such as propagation loss, receiver gain, geometric corrections, and receiver-chain calibration terms can be estimated or compensated before pattern reconstruction~\cite{wang2024spaceborne}. We therefore use the normalized measurement
\begin{align}
    y
    =
    \frac{P}{\Gamma}
    =
    g(\theta)+z,
    \label{eq:normalized_link_budget_measurement}
\end{align}
where $z=n/\Gamma$ is the equivalent residual error after link-budget normalization.

To represent the unknown continuous pattern, we discretize the service sector into a dense output angular grid
\begin{align}
    \mathcal{C}
    =
    \{\theta_1,\theta_2,\ldots,\theta_M\},
\end{align}
and define the corresponding discretized power pattern as
\begin{align}
    \bg
    =
    [g(\theta_1),g(\theta_2),\ldots,g(\theta_M)]^\top
    \in \mathbb{R}_+^M.
\end{align}
The reconstruction task is therefore cast as estimating $\bg$ from a finite noisy directional measurements.
\subsection{Finite-Aperture Motivation for Angular Discretization}

We now explain why it is reasonable to represent a continuous radiation pattern using a finite angular grid. The proposed reconstruction method does not rely on a specific antenna geometry; nevertheless, a uniform linear array (ULA) provides a useful example for understanding the finite-dimensional structure induced by finite aperture. 

Consider an $N_a$-element ULA with inter-element spacing $d$ and complex excitation coefficients $\{w_n\}_{n=0}^{N_a-1}$. Its complex array factor is \cite{van2002optimum} \begin{align} F(\theta) = \sum_{n=0}^{N_a-1} w_n \exp\!\left( j\frac{2\pi d}{\lambda}n\sin\theta \right). \end{align} Introducing the spatial-frequency variable $u=\sin\theta$, we obtain 
\begin{align} 
    F(u) = \sum_{n=0}^{N_a-1}w_n e^{j\alpha n u}, \qquad \alpha=\frac{2\pi d}{\lambda}. \label{eq:ula_af_u} 
\end{align} 
Thus, the ideal complex array factor is a finite-order exponential polynomial in $u$, whose spatial-frequency complexity is determined by the array aperture. The corresponding power pattern can be written as \begin{align} 
    G(u) = |F(u)|^2 = \sum_{\ell=-(N_a-1)}^{N_a-1} c_\ell e^{j\alpha \ell u}, \label{eq:power_pattern_fourier} 
\end{align} 
where the coefficients $\{c_\ell\}$ are determined by the excitation correlations. 

Eq.~\eqref{eq:power_pattern_fourier} shows that the ideal power pattern is not an arbitrary angular function; its variation in the spatial-frequency domain is limited by the finite aperture. This finite-aperture property supports the use of a sufficiently dense output angular grid to represent the continuous pattern. It also suggests that power patterns over a limited sector are structured and compressible, which motivates a compact representation after discretization. In practical satellite payloads, however, the ideal array structure may be affected by element failures, structural deformation, mutual coupling, and other on-orbit perturbations. We therefore do not impose the Fourier model in \eqref{eq:power_pattern_fourier} as a hard constraint. Instead, it serves only as physical motivation for the low-dimensional representation developed in the next section.

\subsection{Sparse Pattern Reconstruction Problem}

Since the output grid is dense while only a limited number of directional measurements can be collected during a calibration pass, reconstructing the full pattern from the available observations is naturally a sparse reconstruction problem. Based on the normalized measurement model and the output-grid representation above, we now formulate this problem. Let 
\begin{align} 
    \Omega=\{1,2,\ldots,M\} 
\end{align} 
be the index set of the output grid, and $\mathcal S\subseteq\Omega$ be the index set of measured directions with $|\mathcal S|=N_s$. The sampled measurements are collected as 
\begin{align} 
    \by_{\mathcal S} = \bg_{\mathcal S}+\bz, \label{eq:sparse_measurement_model} 
\end{align} 
where $\bg_{\mathcal S}\in\mathbb R^{N_s}$ contains the entries of $\bg$ indexed by $\mathcal S$, and $\bz\sim\mathcal N(\mathbf 0,\sigma^2\bm I)$ is the normalized i.i.d. Gaussian measurement-noise vector.

The goal is to reconstruct the full discretized power pattern $\bg$ from the limited noisy measurements $\by_{\mathcal S}$. Let $\mathcal F(\cdot)$ denote an estimator that maps the sampled measurements $\by_{\mathcal S}$ to an estimate of the full pattern $\bg$. The reconstruction problem can be formulated as \begin{subequations} 
    \label{prob:original} 
    \begin{align} 
        \underset{\mathcal{S},\,\mathcal{F}(\cdot)}{\text{minimize}} &\quad \mathbb E\left[ \left\| \mathcal{F}(\by_{\mathcal S})-\bg \right\|_2^2 \right] \\ \text{subject to} &\quad \mathcal S\subseteq\Omega,\\ &\quad |\mathcal S|=N_s. 
\end{align} \end{subequations} 

Problem~\eqref{prob:original} captures two coupled tasks: selecting informative measurement directions and designing an estimator for the full power pattern. In the following sections, we address this problem by introducing a low-dimensional representation of $\bg$, deriving a coefficient-domain estimator, and then developing sampling strategies for both uniformly accessible angular grids and constrained measurement opportunities.

\section{Low-Dimensional Pattern Representation and Coefficient Estimation}
\label{sec:low_dimensional_model}

The sparse reconstruction problem in \eqref{prob:original} is underdetermined because the number of measurements $N_s$ is much smaller than the output dimension $M$. To make the reconstruction feasible, this section introduces the low-dimensional representation and coefficient-estimation components of the framework. The main idea is to approximate $\bg$ using a small number of basis coefficients, thereby converting the original high-dimensional pattern reconstruction problem into a low-dimensional coefficient estimation problem.

\subsection{DCT-Based Basis Representation}

Motivated by the smoothness and spatial correlation of antenna power patterns over a limited angular sector, we represent the discretized pattern using a reduced basis model
\begin{align}
    \bg
    =
    \bm{\Psi}\bm{\omega}+\be_K,
    \label{eq:low_dim_model}
\end{align}
where $\bm{\Psi}\in\mathbb R^{M\times K}$ is a reduced basis matrix with $K\ll M$, $\bm{\omega}\in\mathbb R^K$ is the coefficient vector, and $\be_K\in\mathbb R^M$ denotes the basis truncation error. When the pattern lies exactly in the subspace spanned by $\bm{\Psi}$, we have $\be_K=\bm 0$.

In this paper, $\bm{\Psi}$ is instantiated using the first $K$ columns of an orthonormal Type-II discrete cosine transform (DCT) matrix~\cite{ahmed1974discrete}. The DCT basis is analytical, orthonormal, and provides strong energy compaction for smooth spatial signals. Let $\bC=[\bc_1,\ldots,\bc_M]\in\mathbb R^{M\times M}$ denote the orthonormal DCT-II matrix, whose entries are given by
\begin{align}
    [\bC]_{m,k}
    =
    \rho_k
    \cos\left[
    \frac{\pi}{M}
    \left(m-\frac{1}{2}\right)(k-1)
    \right],
    \label{eq:dct_basis}
\end{align}
for $m,k=1,\ldots,M$, where
\begin{align}
    \rho_k=
    \begin{cases}
    \sqrt{1/M}, & k=1,\\[1mm]
    \sqrt{2/M}, & k>1.
    \end{cases}
\end{align}
The matrix $\bC$ is orthonormal, i.e., $\bC^\top\bC=\bI$. Applying this transform to $\bg$, we obtain the full DCT coefficient vector
\begin{align}
    \bm{\alpha}
    =
    \bC^\top\bg,
\end{align}
and the exact expansion
\begin{align}
    \bg
    =
    \sum_{k=1}^{M}
    \alpha_k \bc_k.
\end{align}

Although $\bg$ can be exactly represented using all $M$ DCT coefficients, directly estimating all coefficients would still lead to an $M$-dimensional inverse problem. We therefore retain only the first $K$ DCT basis vectors and define
\begin{align}
    \bm{\Psi}
    =
    \bC_{(:,1:K)}
    =
    [\bc_1,\ldots,\bc_K],
    \label{eq:Psi_definition}
\end{align}
with the corresponding coefficient vector
\begin{align}
    \bm{\omega}
    =
    [\alpha_1,\ldots,\alpha_K]^\top.
\end{align}
The resulting $K$-term approximation is
\begin{align}
    \bg_K
    =
    \bm{\Psi}\bm{\omega}
    =
    \sum_{k=1}^{K}\alpha_k\bc_k,
    \label{eq:truncated_dct}
\end{align}
and the truncation error in \eqref{eq:low_dim_model} is
\begin{align}
    \be_K
    =
    \bg-\bg_K
    =
    \sum_{k=K+1}^{M}\alpha_k\bc_k.
    \label{eq:truncation_error_vector}
\end{align}

Since $\bC$ is orthonormal, the relative energy of the discarded components is
\begin{align}
    \epsilon_K
    =
    \frac{\|\be_K\|_2^2}{\|\bg\|_2^2}
    =
    \frac{
        \sum_{k=K+1}^{M}\alpha_k^2
    }{
        \sum_{k=1}^{M}\alpha_k^2
    }.
    \label{eq:truncation_energy}
\end{align}
For a prescribed tolerance $\epsilon\in[0,1]$, an effective dimension can be selected as
\begin{align}
    K_{\min}
    =
    \min\left\{
        K\in\{1,\ldots,M\}
        \;\big|\;
        \epsilon_K\le \epsilon
    \right\}.
    \label{eq:Kmin}
\end{align}
{Fig.~\ref{fig:dct_energy} shows the cumulative DCT coefficient energy for the representative antenna patterns. The energy is strongly concentrated in the low-order coefficients: the first $K=23$ coefficients retain approximately $99.99\%$ of the energy.

\begin{figure}[t]
    \centering
    \includegraphics[width=0.92\linewidth]{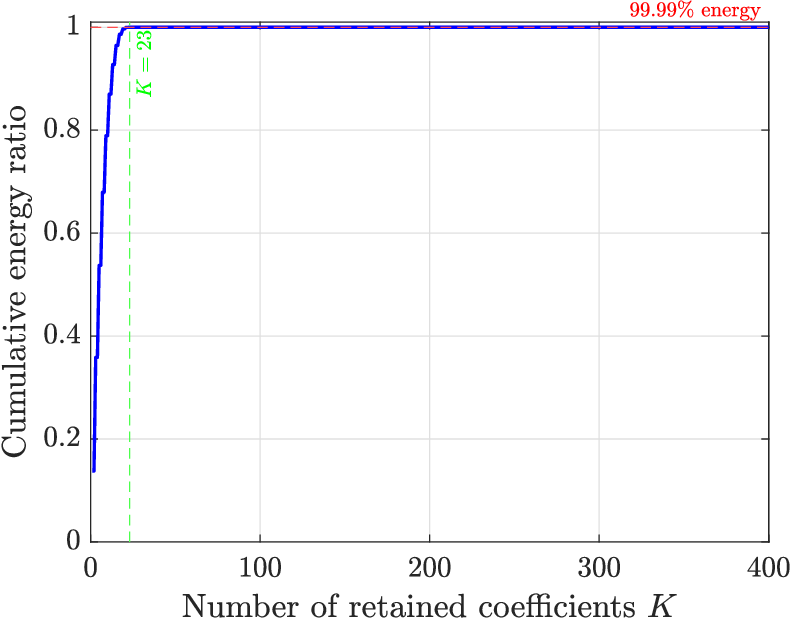}
    \caption{Example of cumulative DCT coefficient energy for the considered perturbed-array pattern model.}
    \label{fig:dct_energy}
\end{figure}}

The reduced representation decreases the number of unknowns from $M$ pattern samples to $K$ coefficients, which is the key reason for its sample efficiency in sparse reconstruction. The resulting sample-efficiency gain over interpolation-based alternatives will be evaluated in Section~\ref{sec:simulation}. In practice, the true on-orbit pattern is unknown before measurement; therefore, the decay of DCT coefficients can be assessed using historical, simulated, or pre-launch reference patterns.

Although the DCT basis is used in this paper for its analytical and orthonormal structure, the subsequent coefficient-domain formulation only requires a prescribed low-dimensional basis. Other bases, such as PCA bases, spherical harmonics, polynomial bases, or data-driven dictionaries, can also be incorporated in the same form.

\subsection{Coefficient-Domain Observation Model and Estimation}

We now derive the observation model in the coefficient domain and estimate the corresponding coefficient vector. Let $\mathcal S\subseteq\Omega$ be the selected sampling set with $|\mathcal S|=N_s$. Substituting \eqref{eq:low_dim_model} into the sampled measurement model gives
\begin{align}
    \by_{\mathcal S}
    =
    \bg_{\mathcal S}
    +
    \bz =
    \bm{\Psi}_{\mathcal S}\bm{\omega}
    +
    (\be_K)_{\mathcal S}
    +
    \bz,
    \label{eq:obs_with_model_error}
\end{align}
where $\bm{\Psi}_{\mathcal S}\in\mathbb R^{N_s\times K}$ is the submatrix obtained by selecting the rows of $\bm{\Psi}$ indexed by $\mathcal S$, and $(\be_K)_{\mathcal S}$ denotes the corresponding entries of the truncation error.

Eq.~\eqref{eq:obs_with_model_error} shows that the observation error consists of two components: the measurement noise $\bz$ and the representation error $(\be_K)_{\mathcal S}$. By collecting these two terms into an effective residual, we obtain the coefficient-domain model
\begin{align}
    \by_{\mathcal S}
    =
    \bm{\Psi}_{\mathcal S}\bm{\omega}
    +
    \tilde{\bz},
    \label{eq:coef_obs_model}
\end{align}
where $\tilde{\bz}=(\be_K)_{\mathcal S}+\bz.$
For coefficient estimation, we use the retained $K$-dimensional component as the signal model and assume that the normalized measurement noise satisfies $\bz\sim\mathcal N(\bm 0,\sigma^2\bI).$
The sampled truncation error $(\be_K)_{\mathcal S}$ is therefore treated as a model-mismatch term rather than as Gaussian noise. When the retained basis dimension $K$ is sufficiently large, this mismatch becomes small, and the least-squares estimator derived below provides a tractable estimate of the dominant coefficient vector $\bm{\omega}$.

Under this Gaussian measurement-noise model, the likelihood associated with the retained $K$-dimensional model is
\begin{align}
    p(\by_{\mathcal S}|\bm{\omega})
    =
    \frac{1}{(2\pi\sigma^2)^{N_s/2}}
    \exp\left(
    -
    \frac{
    \left\|
    \by_{\mathcal S}
    -
    \bm{\Psi}_{\mathcal S}\bm{\omega}
    \right\|_2^2
    }{
    2\sigma^2
    }
    \right).
\end{align}
Maximizing this likelihood is therefore equivalent to solving the least-squares problem
\begin{align}
    \widehat{\bm{\omega}}
    =
    \arg\min_{\bm{\omega}}
    \left\|
    \by_{\mathcal S}
    -
    \bm{\Psi}_{\mathcal S}\bm{\omega}
    \right\|_2^2.
    \label{eq:ls_problem}
\end{align}
If $\bm{\Psi}_{\mathcal S}$ has full column rank, the unique maximum-likelihood estimator is
\begin{align}
    \widehat{\bm{\omega}}
    =
    \left(
    \bm{\Psi}_{\mathcal S}^{\top}
    \bm{\Psi}_{\mathcal S}
    \right)^{-1}
    \bm{\Psi}_{\mathcal S}^{\top}
    \by_{\mathcal S}.
    \label{eq:ml_estimator}
\end{align}
The discretized power pattern is then reconstructed as
\begin{align}
    \widehat{\bg}
    =
    \bm{\Psi}\widehat{\bm{\omega}}.
    \label{eq:pattern_reconstruction}
\end{align}

Eqs.~\eqref{eq:ml_estimator} and \eqref{eq:pattern_reconstruction} show how the sampling set $\mathcal S$ affects the final reconstruction. Specifically, $\mathcal S$ determines the sampled basis matrix $\bm{\Psi}_{\mathcal S}$, which in turn determines the coefficient estimate $\widehat{\bm{\omega}}$ and hence the reconstructed pattern $\widehat{\bg}$. At least $K$ informative measurements are required to identify the $K$-dimensional coefficient vector, and the stability of the estimator depends on the conditioning of $\bm{\Psi}_{\mathcal S}^{\top}\bm{\Psi}_{\mathcal S}$. This relationship is characterized in the next subsection and further used for sampling design.

\subsection{Error Characterization}

We next characterize how the sampling set affects the reconstruction accuracy. The following result summarizes both the statistical error under the retained $K$-dimensional model and the perturbation effect caused by measurement noise and basis truncation.

\begin{proposition}
\label{prop:error_characterization}
Suppose that $\bm{\Psi}_{\mathcal S}$ has full column rank. Under the retained $K$-dimensional model $\bg=\bm{\Psi}\bm{\omega}$ and the Gaussian measurement-noise model, the estimator in \eqref{eq:ml_estimator} is unbiased and satisfies
\begin{align}
    \mathrm{Cov}(\widehat{\bm{\omega}}) = \sigma^2\left(\bm{\Psi}_{\mathcal S}^{\top}\bm{\Psi}_{\mathcal S}\right)^{-1}.
    \label{eq:cov_estimator}
\end{align}
Moreover, since $\widehat{\bg}=\bm{\Psi}\widehat{\bm{\omega}}$ and $\bm{\Psi}^{\top}\bm{\Psi}=\bm I$, the corresponding reconstruction MSE is
\begin{align}
    \mathbb E\left[\|\widehat{\bg}-\bg\|_2^2\right] = \sigma^2\mathrm{Tr}\left[\left(\bm{\Psi}_{\mathcal S}^{\top}\bm{\Psi}_{\mathcal S}\right)^{-1}\right].
    \label{eq:mse_trace}
\end{align}
When the coefficient-domain residual is $\tilde{\bz}=(\be_K)_{\mathcal S}+\bz$, the total reconstruction error satisfies
\begin{align}
    \left\|\widehat{\bg}-\bg\right\|_2 \le \frac{\left\|(\be_K)_{\mathcal S}\right\|_2+\left\|\bz\right\|_2}{\sigma_{\min}(\bm{\Psi}_{\mathcal S})}+\left\|\be_K\right\|_2.
    \label{eq:total_error_bound}
\end{align}
\end{proposition}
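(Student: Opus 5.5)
The proof is a direct computation built around the pseudo-inverse $\bm{\Psi}_{\mathcal S}^{\dagger}=(\bm{\Psi}_{\mathcal S}^{\top}\bm{\Psi}_{\mathcal S})^{-1}\bm{\Psi}_{\mathcal S}^{\top}$, which is well defined by the full-column-rank assumption. It satisfies $\bm{\Psi}_{\mathcal S}^{\dagger}\bm{\Psi}_{\mathcal S}=\bI$. The plan is to substitute the observation model into \eqref{eq:ml_estimator} and split the estimation error into a noise part and a mismatch part.

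\textbf{Exact model.} Under $\bg=\bm{\Psi}\bm{\omega}$, substitution gives
\begin{align}
    \widehat{\bm{\omega}}=\bm{\omega}+\bm{\Psi}_{\mathcal S}^{\dagger}\bz .
\end{align}
Since $\mathbb E[\bz]=\bm 0$, the estimator is unbiased. The covariance is
\begin{align}
    \sigma^2\bm{\Psi}_{\mathcal S}^{\dagger}\bm{\Psi}_{\mathcal S}^{\dagger\top}=\sigma^2(\bm{\Psi}_{\mathcal S}^{\top}\bm{\Psi}_{\mathcal S})^{-1},
\end{align}
which is \eqref{eq:cov_estimator}. For the MSE, note that $\widehat{\bg}-\bg=\bm{\Psi}(\widehat{\bm{\omega}}-\bm{\omega})$. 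Because $\bm{\Psi}^{\top}\bm{\Psi}=\bI$ (orthonormal DCT columns), $\|\bm{\Psi}\bm v\|_2=\|\bm v\|_2$. Hence the MSE equals $\mathbb E\|\widehat{\bm{\omega}}-\bm{\omega}\|_2^2=\mathrm{Tr}\,\mathrm{Cov}(\widehat{\bm{\omega}})$, which gives \eqref{eq:mse_trace}.

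\textbf{Truncation present.} Now $\bg=\bm{\Psi}\bm{\omega}+\be_K$ with $\bm{\omega}$ the first $K$ DCT coefficients, and by \eqref{eq:coef_obs_model} the data are $\by_{\mathcal S}=\bm{\Psi}_{\mathcal S}\bm{\omega}+\tilde{\bz}$. Then
\begin{align}
    \widehat{\bg}-\bg=\bm{\Psi}\bm{\Psi}_{\mathcal S}^{\dagger}\tilde{\bz}-\be_K .
\end{align}
Applying the triangle inequality and the isometry of $\bm{\Psi}$ gives
\begin{align}
    \|\widehat{\bg}-\bg\|_2\le\|\bm{\Psi}_{\mathcal S}^{\dagger}\|_2\,\|\tilde{\bz}\|_2+\|\be_K\|_2 .
\end{align}
The key fact is $\|\bm{\Psi}_{\mathcal S}^{\dagger}\|_2=1/\sigma_{\min}(\bm{\Psi}_{\mathcal S})$. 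This follows from the SVD $\bm{\Psi}_{\mathcal S}=\bm U\bm\Sigma\bm V^{\top}$ (thin, $\bm\Sigma$ invertible), which gives $\bm{\Psi}_{\mathcal S}^{\dagger}=\bm V\bm\Sigma^{-1}\bm U^{\top}$. A final triangle inequality, $\|\tilde{\bz}\|_2\le\|(\be_K)_{\mathcal S}\|_2+\|\bz\|_2$, yields \eqref{eq:total_error_bound}.

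\textbf{Main obstacle.} No step is deep; the argument is bookkeeping. The only point needing care is to keep the two interpretations of $\bm{\omega}$ consistent. In the first part $\bm{\omega}$ is the exact coefficient vector. In the bound \eqref{eq:total_error_bound} it must be the truncated DCT coefficient vector, so that $\be_K$ is exactly the residual $\bg-\bm{\Psi}\bm{\omega}$, and that same $\bm{\omega}$ is what the estimator targets. Also, the orthonormality $\bm{\Psi}^{\top}\bm{\Psi}=\bI$ is needed in both parts, to pass from pattern-domain norms to coefficient-domain norms.
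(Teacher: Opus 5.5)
Your proof is correct and follows essentially the same route as the paper. You substitute the observation model into the closed-form estimator to get unbiasedness and the covariance, use $\bm{\Psi}^{\top}\bm{\Psi}=\bI$ to pass to the pattern-domain MSE, and in the mismatch case combine the spectral norm $1/\sigma_{\min}(\bm{\Psi}_{\mathcal S})$ of the left pseudoinverse with two triangle inequalities; your SVD argument for $\|\bm{\Psi}_{\mathcal S}^{\dagger}\|_2=1/\sigma_{\min}(\bm{\Psi}_{\mathcal S})$ supplies the step the paper only cites.
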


\begin{IEEEproof}
Under the retained model, substituting $\by_{\mathcal S}=\bm{\Psi}_{\mathcal S}\bm{\omega}+\bz$ into \eqref{eq:ml_estimator} gives
\begin{align}
    \widehat{\bm{\omega}}-\bm{\omega} = \left(\bm{\Psi}_{\mathcal S}^{\top}\bm{\Psi}_{\mathcal S}\right)^{-1}\bm{\Psi}_{\mathcal S}^{\top}\bz.
    \label{eq:coef_error}
\end{align}
Since $\mathbb E[\bz]=\bm 0$ and $\mathrm{Cov}(\bz)=\sigma^2\bm I$, $\widehat{\bm{\omega}}$ is unbiased and \eqref{eq:cov_estimator} follows. In addition, under the retained model, $\widehat{\bg}-\bg=\bm{\Psi}(\widehat{\bm{\omega}}-\bm{\omega})$. Since $\bm{\Psi}^{\top}\bm{\Psi}=\bm I$, we have $\|\widehat{\bg}-\bg\|_2=\|\widehat{\bm{\omega}}-\bm{\omega}\|_2$, which gives \eqref{eq:mse_trace}.

For the perturbed model $\by_{\mathcal S}=\bm{\Psi}_{\mathcal S}\bm{\omega}+\tilde{\bz}$, the estimator in \eqref{eq:ml_estimator} gives
\begin{align}
    \left\|\widehat{\bm{\omega}}-\bm{\omega}\right\|_2 &= \left\|\left(\bm{\Psi}_{\mathcal S}^{\top}\bm{\Psi}_{\mathcal S}\right)^{-1}\bm{\Psi}_{\mathcal S}^{\top}\tilde{\bz}\right\|_2 \notag\\
    &\le \left\|\left(\bm{\Psi}_{\mathcal S}^{\top}\bm{\Psi}_{\mathcal S}\right)^{-1}\bm{\Psi}_{\mathcal S}^{\top}\right\|_2\left\|\tilde{\bz}\right\|_2 \notag\\
    &\overset{(a)}{=} \frac{\left\|\tilde{\bz}\right\|_2}{\sigma_{\min}(\bm{\Psi}_{\mathcal S})} \notag\\
    &\overset{(b)}{\le} \frac{\left\|(\be_K)_{\mathcal S}\right\|_2+\left\|\bz\right\|_2}{\sigma_{\min}(\bm{\Psi}_{\mathcal S})},
    \label{eq:coef_error_bound}
\end{align}
where $(a)$ follows from the full-column-rank condition and the spectral norm of the left pseudoinverse, and $(b)$ follows from $\tilde{\bz}=(\be_K)_{\mathcal S}+\bz$ and the triangle inequality. Finally, since $\bg=\bm{\Psi}\bm{\omega}+\be_K$ and $\widehat{\bg}=\bm{\Psi}\widehat{\bm{\omega}}$, we have
\begin{align}
    \left\|\widehat{\bg}-\bg\right\|_2 &= \left\|\bm{\Psi}\left(\widehat{\bm{\omega}}-\bm{\omega}\right)-\be_K\right\|_2 \notag\\
    &\le \left\|\bm{\Psi}\left(\widehat{\bm{\omega}}-\bm{\omega}\right)\right\|_2+\left\|\be_K\right\|_2 \notag\\
    &\overset{(c)}{=} \left\|\widehat{\bm{\omega}}-\bm{\omega}\right\|_2+\left\|\be_K\right\|_2,
\end{align}
where $(c)$ uses the orthonormality of the columns of $\bm{\Psi}$. Combining this inequality with \eqref{eq:coef_error_bound} yields \eqref{eq:total_error_bound}.
\end{IEEEproof}

Proposition~\ref{prop:error_characterization} shows that the sampling set affects reconstruction through the information matrix
\begin{align}
    \bm J_{\mathcal S} = \bm{\Psi}_{\mathcal S}^{\top}\bm{\Psi}_{\mathcal S}.
    \label{eq:information_matrix}
\end{align}
Eq.~\eqref{eq:mse_trace} shows that small eigenvalues of $\bm J_{\mathcal S}$ lead to large estimation variance, while \eqref{eq:total_error_bound} shows that small singular values of $\bm{\Psi}_{\mathcal S}$ amplify both measurement noise and model mismatch. Therefore, sampling design should not only ensure that $\bm{\Psi}_{\mathcal S}$ has full column rank, but also improve the numerical stability of the induced least-squares problem.

From a numerical linear algebra perspective, this stability can also be described by the condition number
\begin{align}
    \kappa(\bm{\Psi}_{\mathcal S}) = \frac{\sigma_{\max}(\bm{\Psi}_{\mathcal S})}{\sigma_{\min}(\bm{\Psi}_{\mathcal S})}.
\end{align}
A large condition number indicates that the coefficient estimate can be sensitive to small perturbations in the observations. This motivates an information-matrix-based sampling design. In the next section, we adopt a determinant-based criterion to select informative measurement directions.

\section{Sampling Design}
\label{sec:sampling_design}

Problem~\eqref{prob:original} jointly optimizes the sampling set $\mathcal S$ and the reconstruction rule $\mathcal F(\cdot)$. With the reconstruction rule specified by the coefficient-domain maximum-likelihood estimator in \eqref{eq:ml_estimator} and \eqref{eq:pattern_reconstruction}, this section focuses on the design of $\mathcal S$. We show that uniform sampling is an effective information-balanced strategy when the angular sector is regularly accessible, while D-optimal sampling is used to handle irregular, clustered, or incomplete measurement opportunities.

\subsection{Sampling Design Problem Formulation}

With the estimator fixed as \eqref{eq:ml_estimator} and \eqref{eq:pattern_reconstruction}, the sampling design problem reduces to selecting $\mathcal S$ to minimize the induced reconstruction MSE. According to Proposition~\ref{prop:error_characterization}, under the retained $K$-dimensional model and the Gaussian measurement-noise model, the objective function in Problem~\eqref{prob:original} can be expressed as
\begin{align}
    &\mathbb E\left[\|\mathcal{F}(\by_{\mathcal S})-\bg\|_2^2\right] \notag\\
    =&\mathbb E\left[\|\widehat{\bg}(\mathcal S)-\bg\|_2^2\right]
    =
    \sigma^2
    \mathrm{Tr}
    \left[
    \left(
    \bm{\Psi}_{\mathcal S}^{\top}
    \bm{\Psi}_{\mathcal S}
    \right)^{-1}
    \right].
    \label{eq:sampling_mse_objective}
\end{align}
Thus, problem~\eqref{prob:original} reduces to the following sampling design problem:
\begin{subequations}
\label{prob:a_optimal}
    \begin{align}
    \underset{\mathcal S}{\emph{minimize}}
    &\quad
    \mathrm{Tr}
    \left[
    \left(
    \bm{\Psi}_{\mathcal S}^{\top}
    \bm{\Psi}_{\mathcal S}
    \right)^{-1}
    \right] \\
    \emph{subject to}
    &\quad \mathcal S\subseteq\Omega_c,\\
    &\quad |\mathcal S|=N_s,
    \end{align}
\end{subequations}
where $\Omega_c\subseteq\Omega$ denotes the \emph{feasible candidate set} of measurable directions. For a fully accessible regular grid, $\Omega_c=\Omega$; for constrained on-orbit opportunities, $\Omega_c$ contains only the directions that can be measured.

Problem~\eqref{prob:a_optimal} is a cardinality-constrained subset-selection problem. Exhaustive search requires evaluating $\binom{|\Omega_c|}{N_s}$ possible sampling sets, and each evaluation involves the inverse of a $K\times K$ information matrix. Thus, directly solving \eqref{prob:a_optimal} is computationally prohibitive for large candidate sets. In the following subsections, we consider two practical strategies. When the angular sector is regularly accessible, we show that uniform sampling provides an information-balanced construction for the truncated DCT basis. When the feasible candidate set $\Omega_c$ is irregular, clustered, or incomplete, we adopt a D-optimal surrogate that replaces the trace-inverse objective with a determinant-based information criterion and admits an efficient greedy implementation.

\subsection{Uniform Sampling under Regular Accessibility}

When the angular sector is regularly accessible, uniform sampling provides a simple information-balanced baseline. Since the reconstruction is performed on the $M$-point output grid, uniform sampling means selecting $N_s$ approximately equally spaced rows from the truncated DCT basis matrix $\bm{\Psi}$. The following proposition gives an ideal midpoint-aligned case in which the sampled rows preserve the orthogonality of the retained DCT modes.

\begin{proposition}[Midpoint-aligned uniform sampling of the truncated DCT basis]
\label{prop:uniform_dct}
Let $\bm{\Psi}\in\mathbb R^{M\times K}$ be the truncated DCT basis defined in \eqref{eq:Psi_definition}. Suppose that the sampling set $\mathcal S_{\mathrm u}=\{s_1,\ldots,s_{N_s}\}\subseteq\Omega$ is chosen such that the selected grid locations satisfy the midpoint-alignment condition
\begin{align}
    \frac{s_i-\tfrac{1}{2}}{M}
    =
    \frac{i-\tfrac{1}{2}}{N_s},
    \qquad i=1,\ldots,N_s.
    \label{eq:midpoint_alignment}
\end{align}
If $K\le N_s$, then the sampled basis matrix $\bm{\Psi}_{\mathcal S_{\mathrm u}}$ satisfies
\begin{align}
    \bm{\Psi}_{\mathcal S_{\mathrm u}}^{\top}\bm{\Psi}_{\mathcal S_{\mathrm u}} = \frac{N_s}{M}\bI_K.
    \label{eq:uniform_information}
\end{align}
Consequently, all singular values of $\bm{\Psi}_{\mathcal S_{\mathrm u}}$ are equal to $\sqrt{N_s/M}$, and the coefficient covariance under independent Gaussian measurement noise is
\begin{align}
    \operatorname{Cov}(\widehat{\bm{\omega}})=\sigma^2\frac{M}{N_s}\bI_K.
    \label{eq:uniform_covariance}
\end{align}
\end{proposition}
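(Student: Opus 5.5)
The approach is to compute the entries of $\bm{\Psi}_{\mathcal S_{\mathrm u}}^{\top}\bm{\Psi}_{\mathcal S_{\mathrm u}}$ directly. Condition \eqref{eq:midpoint_alignment} makes the sampled rows of the $M$-point DCT-II matrix coincide, up to rescaling, with the rows of the $N_s$-point DCT-II matrix. Substituting \eqref{eq:midpoint_alignment} into \eqref{eq:dct_basis} gives, for $s_i$ in $\mathcal S_{\mathrm u}$ and $k\le K$,
\begin{align}
[\bm{\Psi}]_{s_i,k}=\rho_k\cos\!\left[\frac{\pi}{N_s}\left(i-\tfrac12\right)(k-1)\right].
\end{align}
So the $(k,l)$ entry of $\bm J_{\mathcal S_{\mathrm u}}$ is $\rho_k\rho_l\sum_{i=1}^{N_s}\cos[\pi(i-\frac12)(k-1)/N_s]\cos[\pi(i-\frac12)(l-1)/N_s]$.

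Next I will invoke orthonormality of the $N_s$-point DCT-II matrix, whose normalizations are $\rho'_k=\sqrt{1/N_s}$ for $k=1$ and $\sqrt{2/N_s}$ for $k>1$. The hypothesis $K\le N_s$ is needed exactly here: it guarantees that the indices $k-1,l-1\in\{0,\ldots,K-1\}$ fall in the range $\{0,\ldots,N_s-1\}$ where that orthogonality holds. Orthonormality gives
\begin{align}
\sum_{i=1}^{N_s}\cos\!\left[\tfrac{\pi}{N_s}(i-\tfrac12)(k-1)\right]\cos\!\left[\tfrac{\pi}{N_s}(i-\tfrac12)(l-1)\right]=\frac{\delta_{kl}}{(\rho'_k)^2}.
\end{align}
This sum equals $N_s$ when $k=l=1$, $N_s/2$ when $k=l>1$, and $0$ when $k\ne l$. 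If I want a self-contained argument, I would prove it with the product-to-sum identity, which reduces it to sums $\sum_i\cos[\pi(i-\frac12)m/N_s]$ with $m=(k-1)\pm(l-1)$. Writing each such sum as the real part of a geometric series shows it vanishes for $0<|m|<2N_s$.

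For $k=l$ the entry then becomes $\rho_k^2/(\rho'_k)^2$. This ratio equals $(1/M)\cdot N_s=N_s/M$ when $k=1$ and $(2/M)(N_s/2)=N_s/M$ when $k>1$; the off-diagonal entries are zero. This proves \eqref{eq:uniform_information}.

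The remaining claims follow immediately. $\bm J_{\mathcal S_{\mathrm u}}=(N_s/M)\bI_K$ is positive definite, so $\bm{\Psi}_{\mathcal S_{\mathrm u}}$ has full column rank. Its singular values are the square roots of the eigenvalues of $\bm J_{\mathcal S_{\mathrm u}}$, hence all equal to $\sqrt{N_s/M}$. Applying \eqref{eq:cov_estimator} of Proposition~\ref{prop:error_characterization} then gives \eqref{eq:uniform_covariance}.

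The main obstacle is the trigonometric-sum step, specifically the boundary index range. The geometric-series argument needs $|m|<2N_s$ and $m\ne0$, and with $k,l\le K\le N_s$ the largest value is $m\le 2N_s-2$, so this holds. I would verify this range carefully, along with the separate treatment of the $k=1$ normalization, which is why $\rho_k$ and $\rho'_k$ must be matched case by case.
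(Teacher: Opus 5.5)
Your proof is correct and follows the paper's argument. You substitute the midpoint-alignment condition into the DCT-II entries, evaluate the Gram entries with the midpoint cosine orthogonality identity (which is exactly the orthonormality of the $N_s$-point DCT-II matrix you invoke), and match the normalizations $\rho_k$ case by case; your geometric-series check and your verification that $K\le N_s$ keeps $0<|m|\le 2N_s-2$ add details the paper leaves implicit, but the route is the same.
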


\begin{IEEEproof}
For the selected rows, the $(i,k)$th entry of $\bm{\Psi}_{\mathcal S_{\mathrm u}}$ is
\begin{align}
    [\bm{\Psi}_{\mathcal S_{\mathrm u}}]_{i,k}=\rho_k\cos\left[\pi(k-1)\xi_i\right],
\end{align}
where $\rho_1=1/\sqrt M$ and $\rho_k=\sqrt{2/M}$ for $k>1$. Let $p=k-1$ and $q=\ell-1$. Under the midpoint-aligned sampling locations, the midpoint cosine orthogonality identity gives
\begin{align}
    \sum_{i=1}^{N_s}\cos(\pi p\xi_i)\cos(\pi q\xi_i)
    =
    \begin{cases}
    N_s, & p=q=0,\\
    N_s/2, & p=q>0,\\
    0, & p\ne q.
    \end{cases}
\end{align}
Applying the normalization factors $\rho_k$ yields \eqref{eq:uniform_information}. The singular-value and covariance results follow immediately from \eqref{eq:uniform_information} and \eqref{eq:cov_estimator}.
\end{IEEEproof}

Proposition~\ref{prop:uniform_dct} provides an ideal explanation for why uniform sampling is effective in regularly accessible angular sectors. In practice, the selected directions must lie on the finite $M$-point output grid, and exact midpoint alignment may not always hold. In such cases, selecting the grid points closest to the ideal midpoint locations yields an approximately isotropic information matrix when the output grid is sufficiently dense. Therefore, uniform sampling is adopted as the basic reconstruction scheme for regular coverage. Optimized sampling becomes most useful when this favorable uniform geometry cannot be realized, for example when the feasible opportunities are irregular, clustered, or incomplete.

\subsection{D-Optimal Sampling under Constrained Accessibility}

When the feasible candidate set $\Omega_c$ is irregular, clustered, or incomplete, the uniform construction in Proposition~\ref{prop:uniform_dct} may no longer be feasible. In this case, sampling design becomes an active subset-selection problem over the available candidates. Since the A-optimal objective in \eqref{prob:a_optimal} is difficult to optimize directly and lacks a simple greedy marginal gain, we adopt the D-optimal criterion as a tractable surrogate:

\begin{subequations}
\label{prob:d_optimal}
    \begin{align}
    \underset{\mathcal S}{\text{maximize}}
    &\quad
    f(\mathcal S)
    =
    \log\det
    \left(
    \bm{\Psi}_{\mathcal S}^{\top}
    \bm{\Psi}_{\mathcal S}
    +
    \varepsilon\bI
    \right) \\
    \text{subject to}
    &\quad
    \mathcal S\subseteq\Omega_c,\\
    &\quad
    |\mathcal S|=N_s,
    \end{align}
\end{subequations}
where $\varepsilon>0$ is a small regularization parameter introduced for numerical stability. The regularization also ensures that the objective is well defined during the early stages of greedy selection, when fewer than $K$ directions may have been selected.

The D-optimal criterion maximizes the determinant of the regularized information matrix. Without the regularization term, the determinant satisfies
\begin{align}
    \det
    \left(
    \bm{\Psi}_{\mathcal S}^{\top}
    \bm{\Psi}_{\mathcal S}
    \right)
    =
    \prod_{k=1}^{K}
    \sigma_k^2(\bm{\Psi}_{\mathcal S}),
\end{align}
where $\sigma_k(\bm{\Psi}_{\mathcal S})$ denotes the $k$th singular value of the sampled basis matrix. Thus, D-optimality favors sampling sets that enlarge the singular values of $\bm{\Psi}_{\mathcal S}$ and penalizes nearly rank-deficient selections. Although it is not equivalent to directly minimizing the A-optimal trace-inverse objective, it promotes an informative and numerically stable information matrix and admits an efficient greedy implementation.

\begin{remark}[Statistical interpretation]
In the linear Gaussian model, the Fisher information matrix for $\bm{\omega}$ is proportional to $\bm{\Psi}_{\mathcal S}^{\top}\bm{\Psi}_{\mathcal S}$. Therefore, maximizing the determinant of this matrix minimizes the volume of the confidence ellipsoid of the coefficient estimate, providing a statistical justification for the D-optimal sampling design.
\end{remark}

\begin{remark}[Model-aware rather than pattern-adaptive sampling]
The proposed sampling design depends on the basis matrix $\bm{\Psi}$, but not on the unknown coefficient vector $\bm{\omega}$ or the unknown pattern $\bg$. This is intentional: before calibration measurements are acquired, the actual on-orbit radiation pattern is unavailable. The proposed design is therefore model-aware rather than pattern-adaptive, selecting measurement directions that make the retained coefficient vector identifiable and stable to estimate for a broad class of patterns represented by the DCT subspace.
\end{remark}

The D-optimal sampling problem in \eqref{prob:d_optimal} is combinatorial. To obtain an efficient solution, we employ a greedy strategy that sequentially selects the candidate direction with the largest marginal improvement in the D-optimal objective.

For a current sampling set $\mathcal S$, define the regularized information matrix as
\begin{align}
    \bA_{\mathcal S}
    =
    \bm{\Psi}_{\mathcal S}^{\top}
    \bm{\Psi}_{\mathcal S}
    +
    \varepsilon\bI.
\end{align}
Let $\bar{\bm{\psi}}_j\in\mathbb R^K$ denote the transpose of the $j$th row of $\bm{\Psi}$, i.e., the feature vector associated with candidate direction $\theta_j$. If index $j\notin\mathcal S$ is added, the information matrix becomes
\begin{align}
    \bA_{\mathcal S\cup\{j\}}
    =
    \bA_{\mathcal S}
    +
    \bar{\bm{\psi}}_j
    \bar{\bm{\psi}}_j^{\top}.
\end{align}
The corresponding marginal gain is
\begin{align}
    \Delta_j(\mathcal S)
    &=
    f(\mathcal S\cup\{j\})-f(\mathcal S) \nonumber\\
    &=
    \log\det
    \left(
    \bA_{\mathcal S}
    +
    \bar{\bm{\psi}}_j
    \bar{\bm{\psi}}_j^{\top}
    \right)
    -
    \log\det
    \left(
    \bA_{\mathcal S}
    \right).
\end{align}
Using the matrix determinant lemma, this gain simplifies to
\begin{align}
    \Delta_j(\mathcal S)
    =
    \log
    \left(
    1+
    \bar{\bm{\psi}}_j^{\top}
    \bA_{\mathcal S}^{-1}
    \bar{\bm{\psi}}_j
    \right).
    \label{eq:marginal_gain}
\end{align}
Thus, the next sampling index is selected as
\begin{align}
    j^\star
    =
    \arg\max_{j\in\Omega\setminus\mathcal S}
    \Delta_j(\mathcal S).
    \label{eq:greedy_choice}
\end{align}
After selecting $j^\star$, the sampling set and information matrix are updated as
\begin{align}
    \mathcal S
    &\leftarrow
    \mathcal S\cup\{j^\star\},\\
    \bA_{\mathcal S}
    &\leftarrow
    \bA_{\mathcal S}
    +
    \bar{\bm{\psi}}_{j^\star}
    \bar{\bm{\psi}}_{j^\star}^{\top}.
\end{align}
The procedure repeats until $N_s$ sampling directions are selected. The complete algorithm is summarized in Algorithm~\ref{alg:greedy_d_optimal}.

\begin{algorithm}[t!]
\caption{Greedy D-Optimal Sampling}
\label{alg:greedy_d_optimal}
\begin{algorithmic}[1]
\STATE \textbf{Input:} Basis matrix $\bm{\Psi}$, sampling budget $N_s$, regularization parameter $\varepsilon$.
\STATE Initialize $\mathcal S\gets\varnothing$, $\bA\gets\varepsilon\bI_K$.
\FOR{$t=1$ to $N_s$}
    \FOR{each $j\in\Omega_c\setminus\mathcal S$}
        \STATE Compute
        \[
        \Delta_j=
        \log
        \left(
        1+
        \bar{\bm{\psi}}_j^{\top}
        \bA^{-1}
        \bar{\bm{\psi}}_j
        \right).
        \]
    \ENDFOR
    \STATE Select $j^\star=\arg\max_{j\in\Omega_c\setminus\mathcal S}\Delta_j$.
    \STATE Update $\mathcal S\gets\mathcal S\cup\{j^\star\}$.
    \STATE Update $\bA\gets\bA+\bar{\bm{\psi}}_{j^\star}\bar{\bm{\psi}}_{j^\star}^{\top}$.
\ENDFOR
\STATE \textbf{Output:} Sampling set $\mathcal S$.
\end{algorithmic}
\end{algorithm}

\subsection{Overall Reconstruction Procedure}
\label{subsec:overall_procedure}
\begin{figure}[t]
    \centering
    \includegraphics[width=\linewidth]{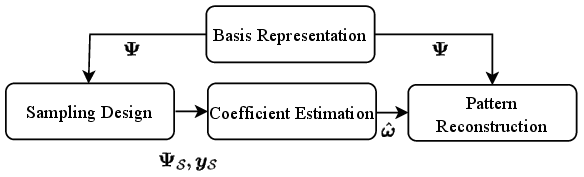}
    \caption{Overview of the proposed on-orbit pattern-reconstruction framework under limited sampling resources.}
    \label{fig:framework}
\end{figure}

Fig.~\ref{fig:framework} summarizes the proposed on-orbit pattern reconstruction framework. The framework first represents the $M$-point antenna power pattern using the first $K$ DCT basis vectors, thereby reducing the number of unknowns from $M$ pattern samples to $K$ basis coefficients. Given a sampling set $\mathcal S$, the selected measurements are then used to estimate the coefficient vector through the coefficient-domain estimator in \eqref{eq:ml_estimator}, and the full-grid pattern is reconstructed by the reduced basis expansion in \eqref{eq:pattern_reconstruction}.

The sampling set $\mathcal S$ is designed according to the accessibility of measurement opportunities. Under regular angular accessibility, uniform sampling provides a well-conditioned and information-balanced choice for the truncated DCT basis. Under constrained or irregular accessibility, D-optimal sampling selects informative feasible directions by maximizing the regularized information volume. Therefore, the overall procedure converts sparse on-orbit pattern reconstruction from direct estimation of $M$ pattern samples into low-dimensional coefficient estimation from carefully selected measurements.

\section{Performance Guarantee and Complexity Analysis}
\label{sec:performance_complexity}
\label{sec:guarantee_complexity}

This section analyzes the structural properties of the proposed D-optimal sampling objective and the computational complexity of the overall reconstruction framework. We first show that the normalized D-optimal objective is monotone and submodular, which leads to a standard greedy approximation guarantee. We then discuss the computational cost of basis construction, greedy sampling, and coefficient estimation.

\subsection{Submodularity and Greedy Approximation Guarantee}

Recall the D-optimal objective
\begin{align}
    f(\mathcal S)
    =
    \log\det
    \left(
    \bm{\Psi}_{\mathcal S}^{\top}
    \bm{\Psi}_{\mathcal S}
    +
    \varepsilon\bI
    \right),
    \label{eq:f_original}
\end{align}
where $\varepsilon>0$ guarantees positive definiteness for any sampling set $\mathcal S$. To apply the standard greedy approximation result, we use the normalized objective
\begin{align}
    F(\mathcal S)
    &=
    f(\mathcal S)-f(\emptyset) \nonumber\\
    &=
    \log\det
    \left(
    \bI
    +
    \varepsilon^{-1}
    \bm{\Psi}_{\mathcal S}^{\top}
    \bm{\Psi}_{\mathcal S}
    \right).
    \label{eq:normalized_objective}
\end{align}
The functions $f$ and $F$ have the same maximizers because they differ only by a constant. Moreover, $F(\emptyset)=0$ and $F(\mathcal S)\ge 0$ for all $\mathcal S\subseteq\Omega_c$.

We next show that $F$ is monotone and submodular. For any $\mathcal S\subseteq\Omega$, define
\begin{align}
    \bB_{\mathcal S}
    =
    \bI
    +
    \varepsilon^{-1}
    \bm{\Psi}_{\mathcal S}^{\top}
    \bm{\Psi}_{\mathcal S}.
\end{align}
If $\mathcal S\subseteq\mathcal T$, then
\begin{align}
    \bm{\Psi}_{\mathcal T}^{\top}\bm{\Psi}_{\mathcal T}
    =
    \bm{\Psi}_{\mathcal S}^{\top}\bm{\Psi}_{\mathcal S}
    +
    \sum_{j\in\mathcal T\setminus\mathcal S}
    \bar{\bm{\psi}}_j
    \bar{\bm{\psi}}_j^{\top}
    \succeq
    \bm{\Psi}_{\mathcal S}^{\top}\bm{\Psi}_{\mathcal S}.
\end{align}
Therefore,
\begin{align}
    \bB_{\mathcal T}\succeq \bB_{\mathcal S}.
\end{align}
Since the log-determinant function is monotone over the positive definite cone, we have
\begin{align}
    F(\mathcal T)\ge F(\mathcal S),
\end{align}
which proves monotonicity.

To establish submodularity, consider $\mathcal S\subseteq\mathcal T\subseteq\Omega_c$ and $j\in\Omega_c\setminus\mathcal T$. The marginal gain of adding $j$ to $\mathcal S$ is
\begin{align}
    F(\mathcal S\cup\{j\})-F(\mathcal S)
    =
    \log
    \left(
    1+
    \varepsilon^{-1}
    \bar{\bm{\psi}}_j^{\top}
    \bB_{\mathcal S}^{-1}
    \bar{\bm{\psi}}_j
    \right),
    \label{eq:normalized_marginal_gain}
\end{align}
where the matrix determinant lemma is used. Since $\mathcal S\subseteq\mathcal T$, we have
\begin{align}
    \bB_{\mathcal T}\succeq \bB_{\mathcal S},
    \qquad
    \bB_{\mathcal T}^{-1}\preceq \bB_{\mathcal S}^{-1}.
\end{align}
Thus,
\begin{align}
    \bar{\bm{\psi}}_j^{\top}
    \bB_{\mathcal T}^{-1}
    \bar{\bm{\psi}}_j
    \le
    \bar{\bm{\psi}}_j^{\top}
    \bB_{\mathcal S}^{-1}
    \bar{\bm{\psi}}_j,
\end{align}
which implies
\begin{align}
    F(\mathcal S\cup\{j\})-F(\mathcal S)
    \ge
    F(\mathcal T\cup\{j\})-F(\mathcal T).
\end{align}
Therefore, $F$ is submodular.

\begin{proposition}[Greedy Approximation Guarantee]
\label{prop:greedy_guarantee}
Let $\mathcal S_{\mathrm{greedy}}$ be the sampling set returned by Algorithm~\ref{alg:greedy_d_optimal}, and let $\mathcal S^\star$ be an optimal solution to the D-optimal sampling problem under the cardinality constraint $|\mathcal S|=N_s$. Then
\begin{align}
    F(\mathcal S_{\mathrm{greedy}})
    \ge
    \left(1-\frac{1}{e}\right)
    F(\mathcal S^\star),
    \label{eq:greedy_bound}
\end{align}
where $F$ is the normalized D-optimal objective defined in \eqref{eq:normalized_objective}.
\end{proposition}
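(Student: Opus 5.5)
The plan is to apply the classical Nemhauser--Wolsey--Fisher argument for monotone submodular maximization under a cardinality constraint, using the properties of $F$ established just before the proposition: $F(\emptyset)=0$, $F$ is monotone, and $F$ is submodular. The first step is to check that Algorithm~\ref{alg:greedy_d_optimal} is the standard greedy algorithm for $F$. At each iteration it maximizes $\Delta_j(\mathcal S)=f(\mathcal S\cup\{j\})-f(\mathcal S)$ over $j\in\Omega_c\setminus\mathcal S$. Since $F$ and $f$ differ by the constant $f(\emptyset)=K\log\varepsilon$, these marginal gains coincide with those of $F$, given in \eqref{eq:normalized_marginal_gain}. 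The greedy choices are therefore identical.

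Let $\mathcal S_t$ denote the greedy set after $t$ steps, and set $\mathcal S^\star=\{o_1,\ldots,o_{N_s}\}$. The core inequality I will establish is
\begin{align}
F(\mathcal S^\star)-F(\mathcal S_t)\le N_s\bigl(F(\mathcal S_{t+1})-F(\mathcal S_t)\bigr).
\end{align}
To prove it, first use monotonicity to get $F(\mathcal S^\star)\le F(\mathcal S^\star\cup\mathcal S_t)$. Next, telescope $F(\mathcal S^\star\cup\mathcal S_t)-F(\mathcal S_t)$ by adding $o_1,\ldots,o_{N_s}$ one at a time. Submodularity bounds each term by the single-element gain $F(\mathcal S_t\cup\{o_i\})-F(\mathcal S_t)$; this term is zero if $o_i\in\mathcal S_t$. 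Each such gain is at most the greedy gain $F(\mathcal S_{t+1})-F(\mathcal S_t)$, because greedy maximizes over all of $\Omega_c\setminus\mathcal S_t$.

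Writing $\delta_t=F(\mathcal S^\star)-F(\mathcal S_t)$, the inequality rearranges to $\delta_{t+1}\le(1-1/N_s)\delta_t$. Iterating from $\delta_0=F(\mathcal S^\star)$, which uses $F(\emptyset)=0$, gives
\begin{align}
\delta_{N_s}\le(1-1/N_s)^{N_s}F(\mathcal S^\star)\le e^{-1}F(\mathcal S^\star).
\end{align}
Rearranging this yields \eqref{eq:greedy_bound}.

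The main obstacle is bookkeeping rather than technique. One point to watch is that the submodularity shown earlier is stated for $\mathcal S\subseteq\mathcal T$ and $j\notin\mathcal T$, which is exactly the form needed in the telescoping step. The other is candidates already in $\mathcal S_t$, which contribute zero gain. I would also note that if $|\Omega_c|<N_s$ the problem is trivial, so we may assume $|\Omega_c|\ge N_s$. Beyond these points, one may simply cite the Nemhauser--Wolsey--Fisher theorem once the three structural properties are verified.
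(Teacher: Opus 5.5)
Your proposal is correct and takes the same approach as the paper, whose proof simply invokes the classical Nemhauser--Wolsey--Fisher greedy guarantee once $F$ is known to be normalized, monotone, and submodular. The only difference is that you write out that classical argument in full. You also make explicit that the greedy steps of Algorithm~\ref{alg:greedy_d_optimal} on $f$ coincide with greedy steps on $F$, because $f(\emptyset)=K\log\varepsilon$ is a constant.
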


\begin{IEEEproof}
As shown above, $F$ is normalized, nonnegative, monotone, and submodular. Under a cardinality constraint, the classical result for greedy maximization of monotone submodular functions gives the $(1-1/e)$ approximation guarantee. This completes the proof.
\end{IEEEproof}

Since $F$ and $f$ differ only by the constant $f(\emptyset)$, maximizing $F$ is equivalent to maximizing the original D-optimal objective $f$. Proposition~\ref{prop:greedy_guarantee} therefore guarantees that the proposed greedy sampler achieves a constant-factor approximation to the optimal D-optimal sampling design in terms of the normalized information gain.

\subsection{Computational Complexity}

The computational cost of the proposed framework mainly consists of basis construction, sampling-set design, coefficient estimation, and pattern reconstruction. The reduced DCT basis $\bm{\Psi}\in\mathbb R^{M\times K}$ requires $\mathcal O(MK)$ operations to construct and can be computed offline.

For regular coverage, uniform sampling only generates $N_s$ approximately equally spaced indices on the $M$-point output grid, with complexity $\mathcal O(N_s)$. This cost is negligible compared with coefficient estimation and pattern reconstruction.

For constrained candidate sets, D-optimal greedy sampling is more expensive. Let $C=|\Omega_c|$ denote the number of feasible candidate directions. If the inverse of the current regularized information matrix is maintained, each marginal-gain evaluation in \eqref{eq:normalized_marginal_gain} costs $\mathcal O(K^2)$. Therefore, the total complexity of greedy sampling over $N_s$ iterations is $\mathcal O(N_s C K^2)$. In the fully accessible case, $C=M$. After each selected index $j^\star$, the inverse matrix can be updated using the Sherman--Morrison formula:
\begin{align}
    \left(
    \bA+
    \bar{\bm{\psi}}_{j^\star}
    \bar{\bm{\psi}}_{j^\star}^{\top}
    \right)^{-1}
    =
    \bA^{-1}
    -
    \frac{
    \bA^{-1}
    \bar{\bm{\psi}}_{j^\star}
    \bar{\bm{\psi}}_{j^\star}^{\top}
    \bA^{-1}
    }{
    1+
    \bar{\bm{\psi}}_{j^\star}^{\top}
    \bA^{-1}
    \bar{\bm{\psi}}_{j^\star}
    }.
    \label{eq:sherman_morrison}
\end{align}
This rank-one update costs $\mathcal O(K^2)$ per iteration and does not change the dominant complexity.

Once the sampling set is fixed, coefficient estimation using \eqref{eq:ml_estimator} requires $\mathcal O(N_sK^2+K^3)$ operations. If the full $M$-point pattern is reconstructed through \eqref{eq:pattern_reconstruction}, the additional cost is $\mathcal O(MK)$. Since $K\ll M$, the online reconstruction remains low-dimensional.

Overall, uniform sampling has negligible design complexity, whereas D-optimal greedy sampling has design complexity $\mathcal O(N_s|\Omega_c|K^2)$. The basis construction and sampling-set design can be performed offline. During the calibration stage, the system only needs to execute the selected measurement opportunities and solve a low-dimensional coefficient-estimation problem.

\section{Simulation}
\label{sec:simulation}

{
In this section, we evaluate angular-grid discretization, low-dimensional reconstruction, and sampling design. A uniform angular grid is used for the regular-access experiments, while the constrained-opportunity experiment propagates two LEO satellites and applies geometry and link-budget screening to generate the candidate angles.
}

\subsection{Simulation Setup}

We consider a uniform linear array (ULA) with $N_a=64$ elements and an inter-element spacing of $d=0.5\lambda$.
The radiation pattern is evaluated over a limited angular region of $\pm10^\circ$, reflecting the narrow service coverage of satellite payloads, and is discretized into $M=400$ angular grid points.

The array factor is given by
\begin{equation}
\mathrm{AF}(\theta)
=
\sum_{n=0}^{N_a-1}
w_n
\exp\big(
\mathrm{j} k d n \sin\theta
\big),
\end{equation}
where $k=2\pi/\lambda$ and $w_n$ denotes the complex excitation coefficient of the $n$th element.
To emulate realistic on-orbit pattern degradation, $w_n$ is subject to random amplitude and phase perturbations modeled as zero-mean Gaussian variables with standard deviations $\sigma_a=0.1$ and $\sigma_\phi=0.1$, respectively, and $15\%$ of the elements are randomly deactivated. The resulting normalized power pattern $G(\theta)=|\mathrm{AF}(\theta)|^2$ serves as the ground truth for reconstruction. The discretized pattern is represented using a low-dimensional DCT basis with $K=24\ll M$ retained components, and the number of sampled directions is set to $N_s=K$.
Additive Gaussian noise with variance $\sigma^2=10^{-8}$ is applied to the sampled measurements, and a regularization parameter $\varepsilon=10^{-6}$ is used to ensure numerical stability. Two baseline methods, namely \emph{uniform sampling} and \emph{random sampling}, are used for comparison.

{Unless otherwise stated, the reported MSE is computed on the normalized linear power pattern and expressed as $10\log_{10}(\mathrm{MSE})$. All Monte Carlo statistics reported in this section are computed over 1000 trials. In every trial, the amplitude errors, phase errors, failed-element set, and measurement noise are independently regenerated.}

\subsection{Angular-Grid Discretization} 
Before evaluating sparse reconstruction, we first verify that the adopted angular output grid is sufficiently dense. A 6400-point midpoint angular grid is used as the reference. For each candidate grid size $M$, the power pattern is evaluated on the corresponding midpoint grid and interpolated back to the reference angles using PCHIP. The interpolation is used only for this discretization test. The grid-induced representation error is computed as 
\begin{align} 
    E_{\mathrm{grid}}(M) = \frac{1}{N_{\mathrm{ref}}} \sum_{q=1}^{N_{\mathrm{ref}}} \left[ \widetilde G_M(\theta_q) - G_{\mathrm{ref}}(\theta_q) \right]^2, \label{eq:grid_discretization_error} 
\end{align} 
where $G_{\mathrm{ref}}(\theta_q)$ denotes the reference pattern evaluated on the dense grid and $\widetilde G_M(\theta_q)$ denotes the coarse-grid pattern interpolated back to the same reference angle. \begin{table}[t] 
    \centering 
    \caption{Angular-grid representation error relative to a 6400-point reference grid. The 5th and 95th percentiles are computed over Monte Carlo trials.} \label{tab:discretization_error} 
    \begin{tabular}{cccc} 
        \toprule 
        $M$ & Median MSE (dB) & 5th perc. (dB) & 95th perc. (dB)  \\    
        \midrule 
        50 & -46.37 & -47.42 & -45.23 \\ 
        100 & -61.45 & -63.04 & -60.43 \\ 
        200 & -77.13 & -79.85 & -75.50 \\ 
        400 & -93.39 & -96.94 & -90.69 \\ 
        800 & -109.39 & -112.22 & -105.94 \\ 
        \bottomrule 
    \end{tabular} 
\end{table} 

Table~\ref{tab:discretization_error} shows that the representation error decreases rapidly as the angular grid becomes denser. At the adopted value $M=400$, the median discretization MSE is $-93.39$ dB, and the 95th percentile is $-90.69$ dB, indicating that the angular-grid representation error is sufficiently small for the present numerical study. 

\subsection{Low-Dimensional Reconstruction}

We now evaluate the sample efficiency gained by the low-dimensional DCT representation.  All methods use the same uniformly spaced sample locations, noisy measurements, and ground-truth patterns. For the DCT method, the retained dimension is matched to the measurement budget, i.e., $K=N_s$, so that the number of measurements also determines the number of recoverable low-frequency DCT modes.

\begin{figure}[t]
    \centering
    \includegraphics[width=0.98\linewidth]{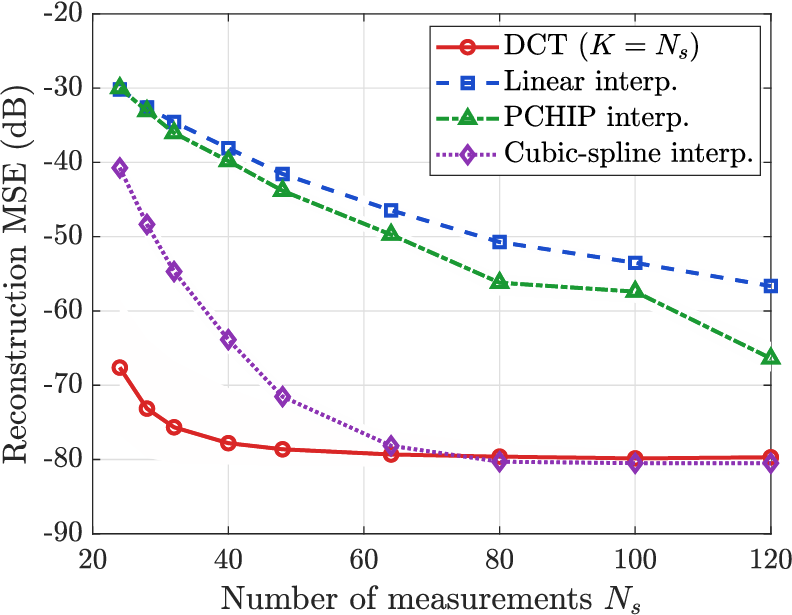}
    \caption{Sample-efficiency comparison between truncated-DCT reconstruction and direct interpolation. For each measurement budget, the DCT method uses $K=N_s$ retained coefficients. All methods use the same uniform sample locations, noisy measurements, and ground-truth patterns.}
    \label{fig:low_dim_benchmark}
\end{figure}

\begin{table}[t]
\centering
\caption{Minimum samples required to reach a target median MSE. A dash indicates that the target is not reached with at most 120 samples.}
\label{tab:sample_efficiency}
\begin{tabular}{ccccc}
\toprule
Target & DCT & Linear & PCHIP & Spline \\
\midrule
$-60$ dB & 24 & -- & 120 & 40 \\
$-70$ dB & 28 & -- & -- & 48 \\
$-75$ dB & 32 & -- & -- & 64 \\
\bottomrule
\end{tabular}
\end{table}

Fig.~\ref{fig:low_dim_benchmark} and Table~\ref{tab:sample_efficiency} show that the truncated-DCT estimator is substantially more sample-efficient than direct interpolation in the sparse-measurement regime. At $N_s=24$, the truncated-DCT estimator attains a median MSE of $-67.62$ dB, compared with $-30.16$, $-29.98$, and $-40.73$ dB for linear, PCHIP, and cubic-spline interpolation, respectively. The DCT model reaches the $-70$-dB target with 28 samples and the $-75$-dB target with 32 samples, whereas cubic spline requires 48 and 64 samples for the same targets. Linear interpolation and PCHIP do not reach $-70$ dB within 120 samples. At dense sampling, cubic spline approaches the same $-80$-dB error range, indicating that the DCT advantage mainly comes from sparse-sample efficiency rather than a universal advantage over interpolation under abundant measurements.

\subsection{Uniform Sampling under Regular Accessibility}

We next examine the role of sampling design when the complete regular angular grid is accessible. This experiment compares uniform sampling, D-optimal sampling, and random sampling with $N_s=K=24$ over the same Monte Carlo trials. In addition to reconstruction MSE, we record the condition number $\kappa(\bm{\Psi}_{\mathcal S})$ and the regularized D-optimal metric $\log\det(\bm{\Psi}_{\mathcal S}^{\top}\bm{\Psi}_{\mathcal S}+\varepsilon\bm I)$. The goal is to verify whether uniform sampling already provides a well-conditioned sampled basis matrix under regular accessibility, as suggested by Proposition~\ref{prop:uniform_dct}.

\begin{figure}[t]
    \centering
    \includegraphics[width=\linewidth]{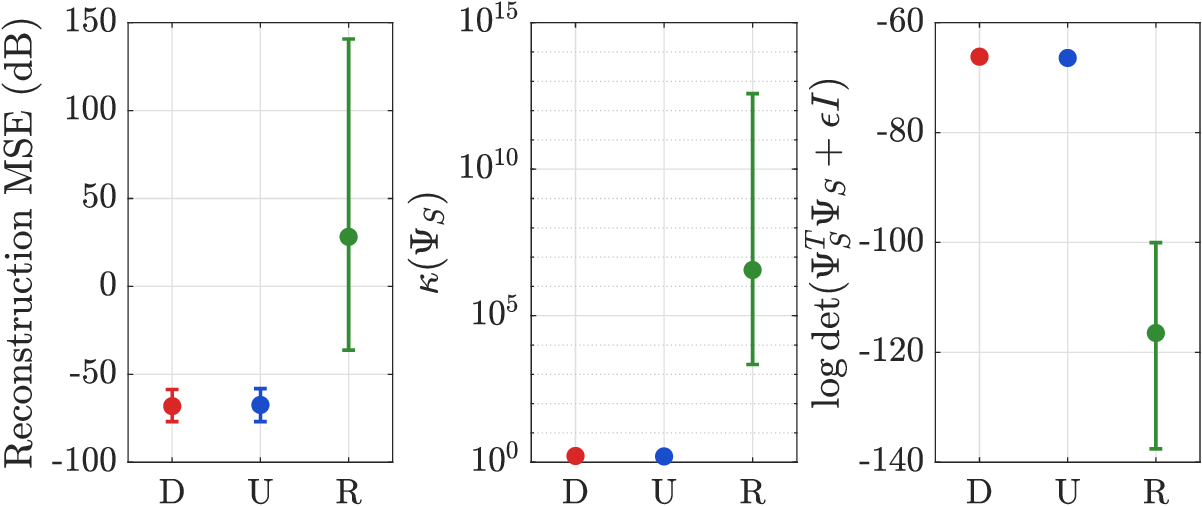}
    \caption{Regular-grid sampling diagnostics at $N_s=K=24$, including reconstruction MSE, condition number, and regularized log-determinant. D, U, and R denote D-optimal, uniform, and random sampling, respectively.}
    \label{fig:regular_design_metrics}
\end{figure}

Fig.~\ref{fig:regular_design_metrics} shows that, on the complete regular grid, D-optimal and uniform sampling are nearly indistinguishable in reconstruction accuracy. Their median MSE values are $-68.07$ and $-67.41$ dB, respectively. Their sampled basis matrices are also similarly well conditioned, with median $\kappa(\bm{\Psi}_{\mathcal S})$ values of 1.64 and 1.58, and their regularized log-determinants are close, at $-66.19$ and $-66.42$, respectively. In contrast, random sampling has a median MSE of 28.20 dB, a median condition number of $3.63\times10^{6}$, and a median log-determinant of $-116.48$. These results confirm that uniform sampling is a strong baseline in the regular-access case because it approximately preserves the near-orthogonality of the sampled low-frequency DCT columns.

\begin{figure}[t]
    \centering
    \includegraphics[width=\linewidth]{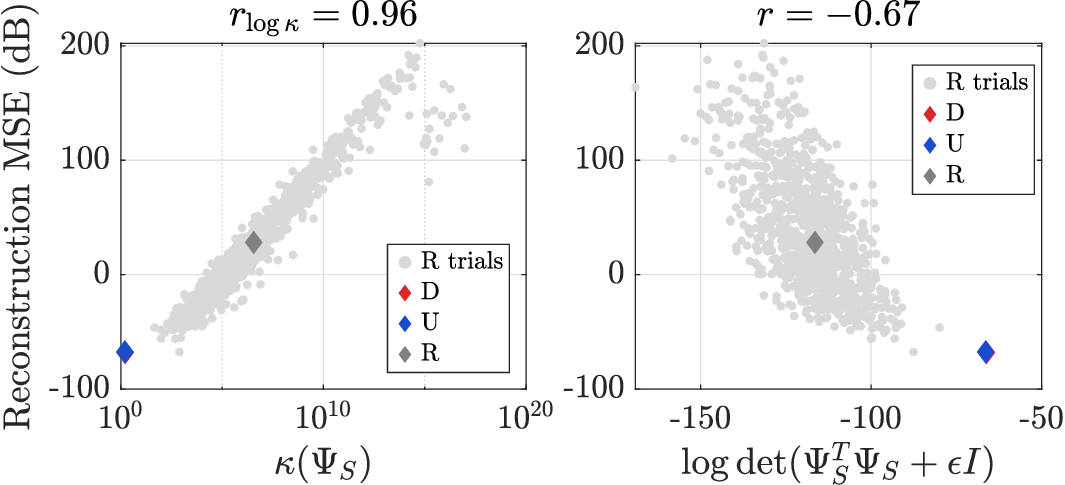}
    \caption{Relationship between reconstruction MSE and sampling-matrix metrics on the complete regular grid. The coefficient $r$ denotes the Pearson correlation coefficient computed over random-sampling trials.}
    \label{fig:regular_metric_relationship}
\end{figure}

Fig.~\ref{fig:regular_metric_relationship} further illustrates how sampling-matrix metrics relate to reconstruction error. Across random-sampling trials, the correlation between MSE in dB and $\log_{10}\kappa(\bm{\Psi}_{\mathcal S})$ is 0.964, whereas the correlation between MSE in dB and the regularized log-determinant is $-0.666$. Thus, ill-conditioned sampling matrices tend to amplify noise and model mismatch, while larger information volume tends to improve reconstruction stability. This relationship is statistical rather than deterministic, because the realized MSE also depends on the random antenna perturbation, the failed-element realization, the finite-$K$ model mismatch, and measurement noise. These observations suggest that optimized D-optimal sampling is most useful when physical constraints destroy the favorable uniform sampling geometry, which is examined next.

\subsection{Sampling Design with Orbit-Generated Opportunities}

\begin{table}[t]
\centering
\caption{Key parameters of the orbit-generated opportunity experiment.}
\label{tab:orbit_parameters}
\footnotesize
\setlength{\tabcolsep}{3pt}
\begin{tabular}{p{0.32\linewidth}p{0.60\linewidth}}
\toprule
Item & Setting \\
\midrule
Main orbit & 520-km circular LEO, $i=53.0^\circ$, RAAN $=0^\circ$, initial phase $=0^\circ$ \\
Calibration orbit & 500-km circular LEO, $i=53.5^\circ$, RAAN $=0.5^\circ$, initial phase $=2.5^\circ$ \\
Propagation & Two-body Keplerian model, 48~h campaign, 60-s sampling interval \\
Antenna frame & LVLH/RSW attitude, $x_{\rm A}=S$, $y_{\rm A}=W$, $z_{\rm A}=R$, $\theta=\operatorname{atan2}(y_{\rm A},x_{\rm A})$ \\
Screening & Earth clearance $\ge100$~km, range 50--1200~km, cone angle $\le10.5^\circ$, link margin $\ge3$~dB \\
Link budget & 26~GHz, 1~W transmit power, 35-dBi transmit/receive gains, 3-dB loss, 20-MHz bandwidth, 5-dB noise figure, 10-dB required SNR \\
\bottomrule
\end{tabular}
\end{table}

We finally examine the role of sampling design when the measurement opportunities are constrained by orbital geometry. In contrast to the complete regular angular grid considered in the previous subsection, the candidate directions here are generated from a simplified two-satellite orbital scenario and are generally nonuniform and incomplete. 

{The candidate set is generated using MATLAB's two-body Keplerian propagator with the parameters in Table~\ref{tab:orbit_parameters}. The calibration satellite is placed below the main satellite, while the differential altitude, inclination, RAAN, and phase make the relative line of sight sweep through the antenna-frame angular sector. Link margin is used for screening only; after compensating the known link terms, the retained power measurements follow the normalized noise model in \eqref{eq:normalized_link_budget_measurement}.}

\begin{figure}[t]
    \centering
    \includegraphics[width=0.98\linewidth]{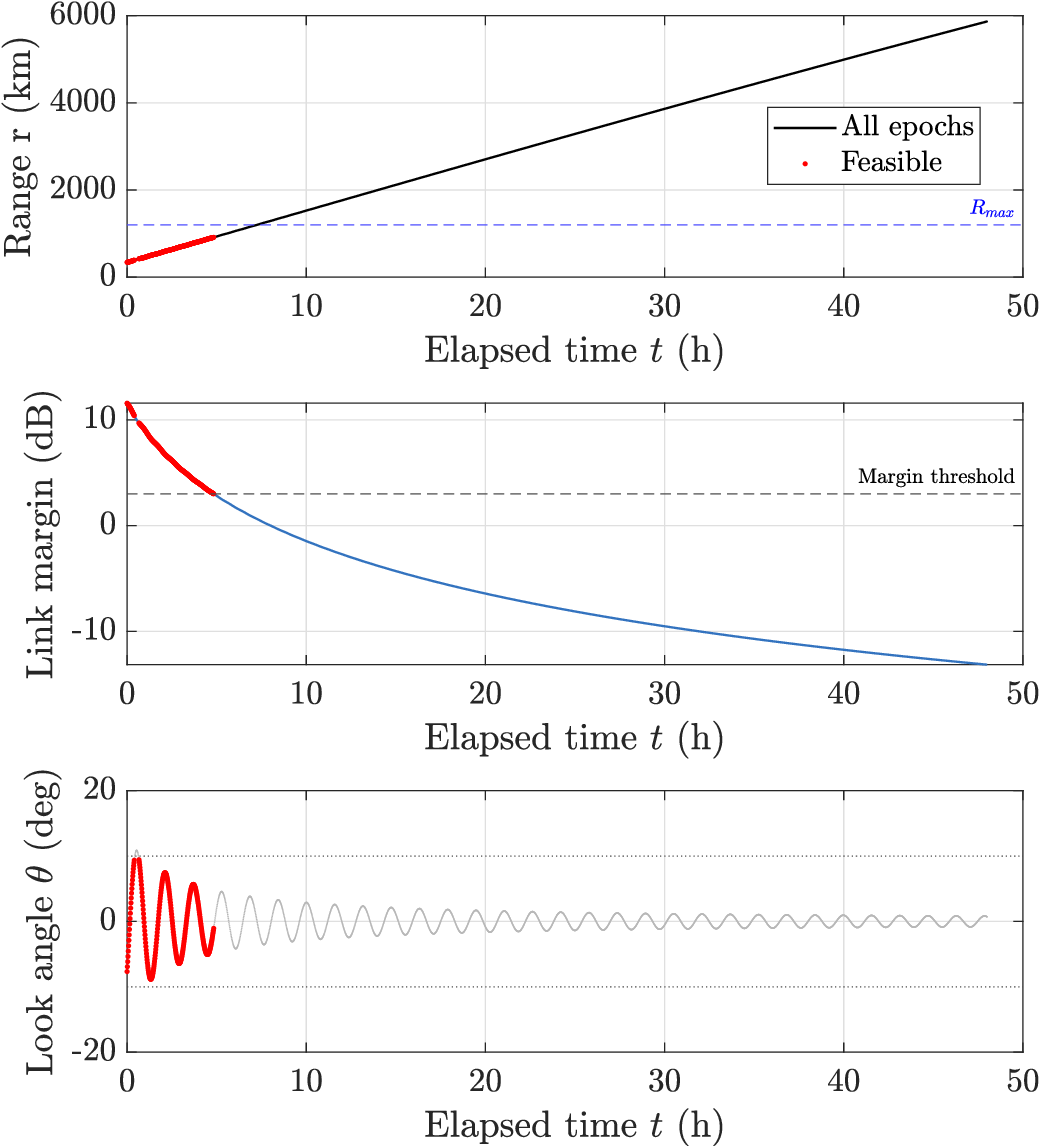}
    \caption{Orbit-generated opportunity screening over the 48-h calibration window. From top to bottom: inter-satellite range, link margin, and antenna-frame look angle. Red markers denote the 276 feasible epochs that pass all screening constraints.}
    \label{fig:orbit_geometry}
\end{figure}

\begin{figure}[t]
    \centering
    \includegraphics[width=0.98\linewidth]{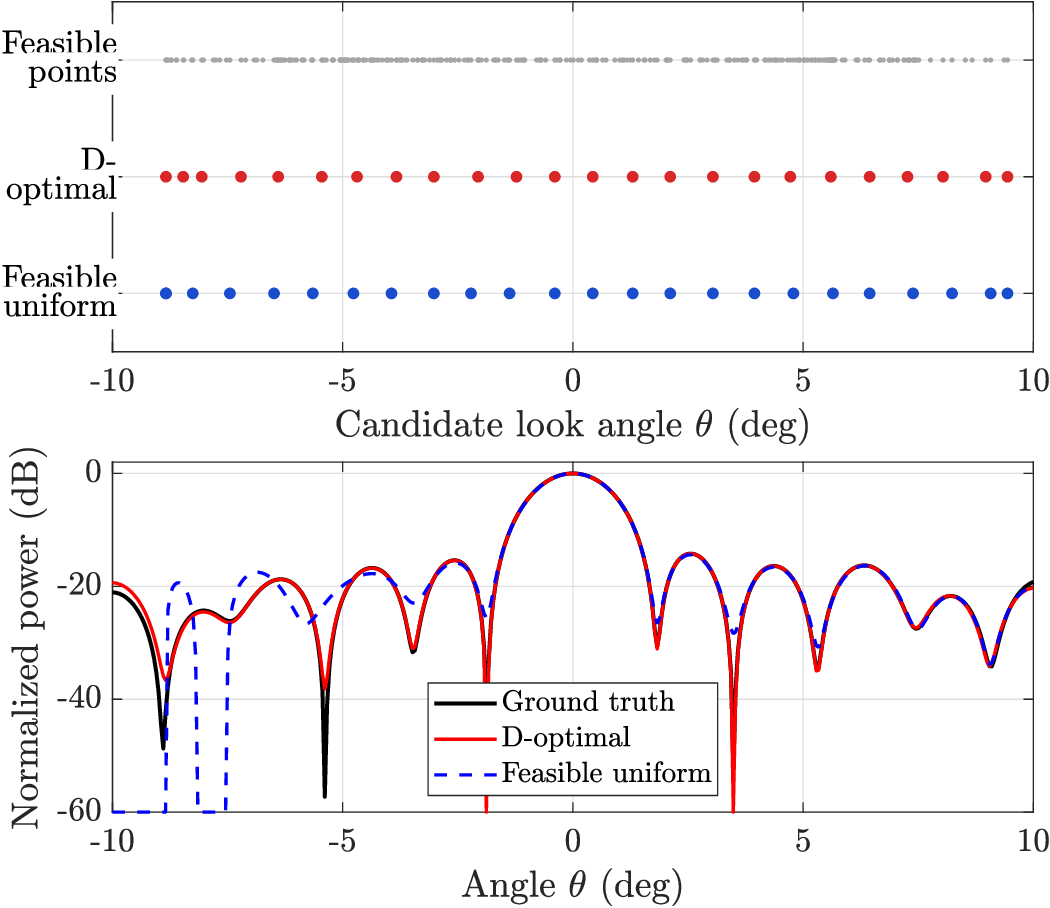}
    \caption{{Representative orbit-generated trial. Top: all feasible candidate angles and the angles selected by D-optimal and feasible uniform sampling. Bottom: the true and reconstructed normalized power patterns. }}
    \label{fig:orbit_selection}
\end{figure}

\begin{figure}[t]
    \centering
    \includegraphics[width=0.98\linewidth]{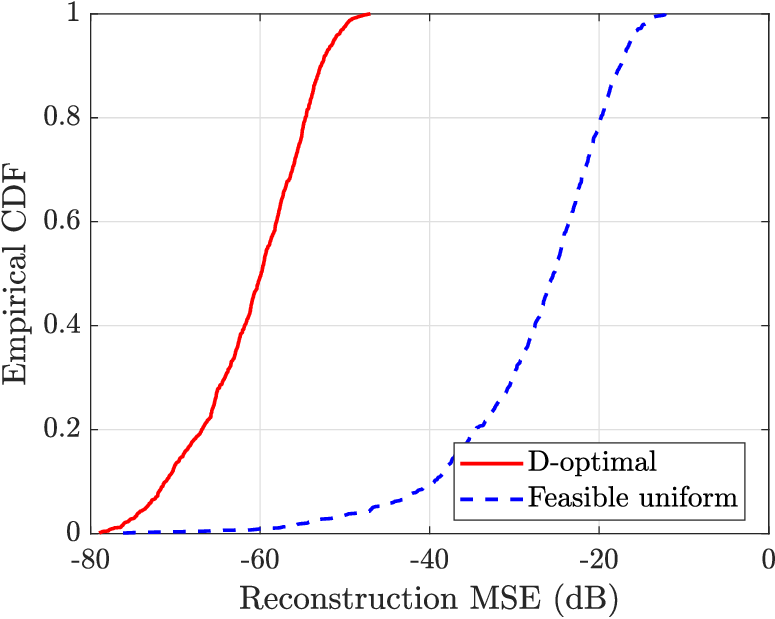}
    \caption{Empirical CDFs of reconstruction MSE for D-optimal and feasible uniform selection from the same 276 orbit-generated candidates, with $N_s=K=24$.}
    \label{fig:orbit_ecdf}
\end{figure}

Fig.~\ref{fig:orbit_geometry} on the next page illustrates the construction of the orbit-generated candidate set. The two-body propagation produces 2881 epochs over the 48-h calibration window. After applying Earth-visibility, range, link-margin, antenna-field-of-view, and pattern-sector constraints, 276 feasible measurement opportunities remain. These feasible epochs form the candidate set $\Omega_c$, from which $N_s=24$ measurements are selected. The feasible look angles span $[-8.84^\circ, 9.44^\circ]$, and the corresponding inter-satellite ranges are from 340.3 to 912.8 km. D-optimal and feasible uniform selection both choose 24 distinct epochs from this set, where feasible uniform greedily selects the remaining candidate nearest to each of 24 uniformly spaced targets over $[-10^\circ, 10^\circ]$.

{Figs.~\ref{fig:orbit_selection} and \ref{fig:orbit_ecdf} on the next page show that D-optimal selection leads to more stable reconstruction from the orbit-generated candidate set. In the representative trial, the MSE values are $-64.02$ and $-29.81$~dB for D-optimal and feasible uniform selection, respectively. Over 1000 trials, their median MSE values are $-59.87$ and $-25.39$~dB. The trial-wise median D-optimal gain is $34.21$~dB, and D-optimal selection achieves lower MSE in $98.6\%$ of trials. These results indicate that, for the irregular orbit-generated opportunity set, selecting samples according to the retained-subspace information criterion leads to more reliable reconstruction than simply approximating uniform angular targets from the feasible candidates.}

\section{Conclusion}
\label{sec:conclusion}
This paper develops a cooperative on-orbit framework for reconstructing satellite transmit antenna patterns from sparse directional measurements. We first convert the received calibration power into normalized pattern samples through orbit-attitude geometry and link-budget compensation. We then represent the antenna power pattern by a truncated DCT basis, thereby transforming the high-dimensional pattern recovery problem into a low-dimensional coefficient-estimation problem. A closed-form maximum-likelihood estimator is derived, and its error characterization reveals the effects of DCT truncation, measurement noise, and sampled-basis conditioning on reconstruction accuracy. Furthermore, we study the sampling design problem under different accessibility conditions. For regularly accessible angular sectors, uniform sampling provides an information-balanced baseline for the retained DCT modes; for constrained feasible opportunities, D-optimal sampling selects informative directions from the available candidates. Simulation results verify the sample efficiency of the low-dimensional DCT model and demonstrate the benefit of D-optimal sampling when orbit-generated opportunities are irregular or incomplete. These results show that reliable sparse on-orbit pattern reconstruction requires both a compact low-dimensional representation and a sampling strategy matched to the available calibration opportunities.

\bibliographystyle{IEEEtran}
\bibliography{IEEEabrv,references}

\end{document}